\documentclass[letterpaper, 10 pt, conference]{ieeeconf}  
\IEEEoverridecommandlockouts                        
\usepackage[T1]{fontenc}
\usepackage{cite}

\usepackage{amsmath,amssymb,amsfonts,bm}
\usepackage{amsthm}
\usepackage{graphicx}
\usepackage{textcomp}
\usepackage{xcolor}
\usepackage{color}
\usepackage{hyperref}
\usepackage{amsmath}
\usepackage[labelsep=period]{caption}
\usepackage{graphics} 
\usepackage{epsfig}
\usepackage{times} 
\usepackage{textcomp}
\usepackage{multicol}

\usepackage{float}
\usepackage{dsfont}
\usepackage{anyfontsize} 
\usepackage{multirow}
\usepackage{centernot}
\usepackage{subcaption}
\usepackage{graphicx}
\usepackage[top=54pt, left=54pt, right=54pt, bottom=54pt]{geometry}

\usepackage{enumerate}

\usepackage{algorithm}
\usepackage{algorithmicx}
\usepackage{algpseudocode}
\usepackage{flushend}
\newcommand{\tfm}{\emph{TimesFM}}
\usepackage{textcomp}

\usepackage{mathtools}
\mathtoolsset{showonlyrefs}
\newtheorem{thm}{Theorem}
\newtheorem{problem}{Problem}
\newtheorem{proposition}{Proposition}

\newtheorem{remark}{Remark}

\title{Attack Detection using Time Series Foundation Models}
\author{Sribalaji C. Anand, Anh Tung Nguyen, and George J. Pappas
\thanks{This work is supported by the Swedish Research Council under the grant 2024-00185. Sribalaji C. Anand, and George J. Pappas are with the 
University of Pennsylvania, United States. Sribalaji C. Anand is also affiliated with
KTH Royal Institute of Technology, Sweden. Anh Tung Nguyen is with 
Uppsala University, Sweden. (e-mail: srca@kth.se, pappasg@seas.upenn.edu, anh.tung.nguyen@it.uu.se)}}
\begin{document}
\maketitle
\thispagestyle{empty}
\pagestyle{empty}
%
%
%
\begin{abstract}
This paper addresses the problem of attack detection in cyber-physical systems without any knowledge of the plant model or its structure. A remotely located plant transmits sensor measurements to an operator over a network that is assumed to be under attack. We consider two classes of attacks: model-free replay attacks and model-based stealthy attacks. For the latter, we derive closed-form expressions for the optimal stealthy attack policy against a $\chi^2$ detector, for both linear and nonlinear systems. We then propose a model-structure-free detector based on \emph{TimesFM}, a time-series foundation model developed by Google {Research}, which serves as a surrogate residual generator operating in a zero-shot fashion. We show empirically that the \emph{TimesFM}-based detector achieves a comparable or superior attack detection performance. The efficacy of the proposed approach is demonstrated numerically on the IEEE 14-bus power system. We also demonstrate that \emph{TimesFM} predictions can serve as a substitute for corrupted measurements, a practical mitigation technique when classical redundancy assumptions fail.
\end{abstract}
\section{Introduction}
Attack detection in Cyber-Physical Systems (CPSs) has received considerable research attention in recent years~\cite{giraldo2018survey}. In general, attack detection in CPSs proceeds in two stages: a residual signal is first constructed from the received output signal, and a statistical hypothesis test, such as a $\chi^2$ test and a cumulative-sum test~\cite{murguia2016cusum}, is then applied to the residual to flag anomalies. Both stages, however, rely critically on model knowledge. 
This raises a natural question: \emph{Can we design an efficient attack detector without any explicit knowledge of the system model structure or parameters?}
\setlength{\belowdisplayskip}{2pt}
\setlength{\abovedisplayskip}{2pt}

Driven by seminal works on data-driven control~\cite{van2025data}, there has been a growing body of literature on data-driven attack detection~\cite{krishnan2020data,zhao2022data,shinohara2025detection}. For instance,~\cite{krishnan2020data} derives conditions for the existence of a data-driven perfectly undetectable attack and develops a corresponding data-driven detector. The work~\cite{shinohara2025detection} considers attack detection in the noiseless case where signals are watermarked. However, none of these works address the detection of \emph{optimal} attacks that are designed based on model and detector knowledge. In this work, we aim to design a model-structure-free\footnote{In the sequel, \emph{model-structure-free} refers to the absence of any knowledge of the plant model, neither its structure nor its parameters.} detector for generic nonlinear systems in the presence of noise that can detect both model-free and model-based stealthy attacks (see Fig.~\ref{fig:scheme}). In the spirit of the Swiss Cheese Model of risk management~\cite{reason2006revisiting}, the proposed {model-structure-free} detector is intended to serve as a second layer of defense, deployed alongside existing detectors (possibly already) present in traditional CPSs. To this end, we depart from traditional control-theoretic tools, such as Willems' fundamental lemma~\cite{willems2005note}, and adopt foundation models as surrogate residual generators. In particular, we employ the \tfm\ model developed by Google~\cite{das2024decoder}.

Concretely, we consider a dynamical system located remotely, transmitting sensor measurements over a network for monitoring purposes (see Fig.~\ref{fig:scheme}). An attacker is assumed to tamper with the measurements received by the operator. We consider two attacker types: a \emph{model-free} attacker that conducts replay attacks \cite{falliere2011w32}, and a \emph{model-based} attacker that exploits full system knowledge to construct stealthy attacks \cite{umsonst2022experimental}. The main contributions of this paper are as follows
\begin{enumerate}
    \item \emph{Optimal stealthy attack construction:} We consider the case where the operator employs a model-based $\chi^2$ detector. Assuming the attacker has full model knowledge, we derive closed-form expressions for stealthy attacks that maximally bias the state estimate while remaining stealthy, for both linear and nonlinear systems.
    \item \emph{Model-free detection via foundation models:} We {leverage} \tfm\ as a surrogate residual generator and apply a $\chi^2$ test to its prediction errors (Algorithm~\ref{alg:LTI:detection}). Monte-Carlo empirical results show that model-based stealthy attacks can be reliably detected by \tfm, raising the bar for simultaneous evasion. For model-free replay attacks, we show empirically that \tfm\ achieves comparable or superior detection performance.

    \item \emph{Practical heuristic for overcoming fundamental limitations:} Resilient state estimation  requires detectability under any subset of $N{-}2M$ sensors~\cite{nakahira2018attack}\footnote{Here, attackers compromise any unknown $M$ sensors from $N$ sensors.}. We empirically show that a reliable state estimate can be obtained using \tfm\ even when this assumption fails.
\end{enumerate}

We illustrate the efficacy of our detector on the IEEE 14-bus power system. To the best of our knowledge, this is the first work to propose a foundation-model-based, model-structure-free zero-shot detector against both replay attacks and model-based stealthy attacks, in the presence of noise.

\emph{Related works:} The attack detection literature is vast, and a comprehensive review is beyond the scope of this paper. We instead position our work precisely within this landscape. Broadly, let us consider two players: the attacker and the detector, each of which may be model-based or model-free. Traditionally, a model-based detector can be used to detect model-free replay attacks~\cite{weerakkody2014detecting}. {In the worst-case scenario,} if the attacker has access to the system model, model-based attacks can be designed to be stealthy~\cite{umsonst2022experimental}, {which} we establish formally and illustrate in this paper. By contrast, {thanks to the fact that} \tfm\ has no closed-form innovation structure for the attacker to exploit, attacks designed to be stealthy against the model-based detector are {much harder to be} stealthy against \tfm.

There is also a growing body of work using machine learning architectures for attack detection~\cite{nanduri2016anomaly,kravchik2018detecting}. However, these approaches require task-specific training data, hyperparameter tuning, and retraining whenever the system changes. By contrast, \tfm\ operates in a zero-shot fashion; no training, retraining, or hyperparameter tuning is required, and the same model applies regardless of the system or context/memory length. This makes it a true plug-and-play solution for model-structure-free attack detection.
%

\emph{Notation:} 
Let $A \in \mathbb{R}^{n \times n}$, then $\rho(A)$ denotes its spectral radius. Given $x \in \mathbb{R}, x>0$, $\Gamma(x) = \int_{0}^{\infty} t^{x-1}e^{-t}\text{d}t$ denotes the gamma function. Given $x, y \in \mathbb{R}$ and $y >0$, the Regularised Lower Incomplete Gamma (RLIG) function is given as $P(x,y) = \frac{1}{\Gamma (y)} \int_{0}^x t^{y-1}e^{-t}\text{d}t \triangleq q$, and $P^{-1}(q;y)$ represents the inverse of the RLIG function for a given $y >0$.
\begin{figure}[t]
    \centering
    \includegraphics[width=7.5cm]{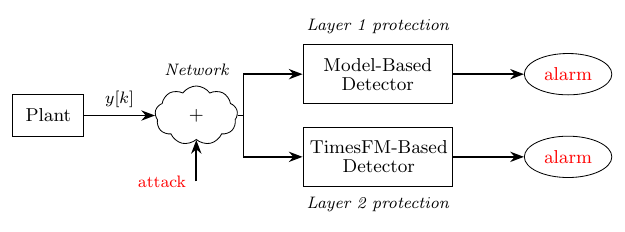}
    \vspace{-5pt}
    \caption{\small A schematic representation of the problem setup.}
    \label{fig:scheme}
    \vspace{-15pt}
\end{figure}
\vspace{-10pt}
\section{Problem Formulation}\label{sec:problem}
In this section, we formalize the CPS setup for studying attack detection, where a remotely located plant periodically transmits its outputs to the plant operator over a network. This communication network is assumed to be under attacks.
\vspace{-5pt}
\subsection{Plant and detector}
Consider a discrete-time dynamical system of the form
\begin{equation}\label{eq:plant}
x[k+1] = f(x[k]) + w[k],\quad y[k] = h(x[k]) + v[k]
\end{equation}
where $x[k] \in \mathbb{R}^n$ is the state of the plant, $y[k] \in \mathbb{R}^m$ is the plant output, and $w[k] \overset{iid}{\sim} \mathcal{N}(0, \Sigma_w)$ and $v[k] \overset{iid}{\sim} \mathcal{N}(0, \Sigma_v)$ are Gaussian process {disturbance} and {Gaussian} measurement noise, respectively. We assume the operator has access to a state estimator of the form
\begin{equation}\label{eq:estimator}
    \begin{bmatrix} \hat{x}_p[k]^\top & \hat{y}_p[k]^\top \end{bmatrix}^\top = \Phi(y[k], \cdot),
\end{equation}
where $\hat{x}_p[k]$ and $\hat{y}_p[k]$ denote the estimated state and output, respectively, and $\Phi$ encodes the plant dynamics (see Remark~\ref{rem:primary}). Using $\Phi$, the operator employs a primary $\chi^2$ detector of the form
\begin{equation}\label{eq:detector}
    g_p[k] > \tau_p \implies \text{alarm}, \; g_p[k] \leq \tau_p \implies \text{no alarm,}
\end{equation}
where $g_p[k] \triangleq z_p[k]^\top \Sigma_p^{-1} z_p[k]$ is the $\chi^2$ detection statistic, $z_p[k] \triangleq y[k]-\hat{y}_p[k]$ is the residue signal, $\Sigma_p$ is the residue covariance, and $\tau_p$ is chosen to achieve a desired nominal FAR $\alpha_p$. As discussed in the introduction, model-based attacks can be designed to be stealthy against $g_p$ by construction. The operator therefore also deploys a secondary, {model-structure-free} detector $g_s$ based solely on historically observed data
\begin{equation}
    g_s(\cdot,k) > \tau_s \implies \text{alarm}, \; g_s(\cdot,k) \leq \tau_s \implies \text{no alarm,}
\end{equation}
where $\tau_s$ is chosen to achieve a desired nominal FAR $\alpha_s$. The design of both $\tau_p$ and $\tau_s$ is discussed in detail later.
\begin{remark}\label{rem:primary}
The primary estimator and detector in~\eqref{eq:estimator} and~\eqref{eq:detector} are assumed to characterize the optimal stealthy attacks in Theorems~\ref{thm:impact:LTI} and~\ref{thm:impact:NL}. The TimesFM-based detector operates independently of $g_p$; attacks designed to be stealthy against $g_p$ are, in practice, detectable by monitoring $g_s$. 
$\hfill\triangleleft$
\end{remark}
\vspace{-10pt}
\subsection{Attacker}\label{sec:adversary}
We consider an attacker that corrupts the sensor measurements transmitted to the operator as follows
\begin{equation}
    \tilde{y}_i[k] \triangleq \varphi_i(y_i[k]), \quad \forall\, i \in \{1,\dots,m\},\ \forall\, k \geq k_a, \label{def_attack_output}
\end{equation}
where $\varphi_i: \mathbb{R} \to \mathbb{R}$ and $k_a$ is the time at which the attack begins. We consider two classes of attacks: model-free replay attacks and model-based stealthy attacks. 
\subsubsection{Model-free replay attacks}
Replay attacks are of particular practical relevance, as real-world instances, such as the Stuxnet malware~\cite{falliere2011w32}, have demonstrated their potential to cause catastrophic consequences in CPSs. Thus, let us consider an attacker that records sensor data during the time interval $[\kappa_0, \kappa_1]$, with $T = \kappa_1 - \kappa_0 + 1$. Then, the replay attack takes the attack policy \eqref{def_attack_output} in the following form
\begin{equation}
\varphi_i(y_i[k]) \triangleq y_i\bigl[\kappa_0 + (k - k_a) \bmod T\bigr], \; \forall\, i \in \{1,\dots,m\},
\end{equation}
for all $k \geq k_a$, where $k_a \geq \kappa_1$ {and $\bmod$ stands for the modulus operator}. The attacker loops through the recorded sequence $\{y_i[\kappa_0], \dots, y_i[\kappa_1]\}$, possibly indefinitely, so the operator continuously receives a repeating window of legitimate-looking data while the true plant state evolves freely. Such attacks require no model knowledge and exploit the fact that the system operates in steady state.
\subsubsection{Model-based maximum-impact stealthy attacks}
From a risk management perspective, one must consider the worst-case scenario in which the attacker has exact model knowledge. In this case, attacks that maximally degrade the operator's state estimate while remaining undetected by $g_p$ can be designed. Taking the attack policy~\eqref{def_attack_output} in the additive form $\varphi_i(y_i[k]) \triangleq y_i[k] + a_i[k]$, the attack design problem becomes
\begin{equation}\label{opt:stealthy}
\scalebox{0.91}{$\underset{a}{\sup} \left\{\mathbb{E}\left[w^\top \epsilon_p[T]\right] \big\vert \; \mathbb{P}\left(g_p^a[k] > \tau_p\right) \leq \alpha_p + \Delta\alpha_p, \forall\, k \geq k_a\right\}$}
\end{equation}
where $w \in \mathbb{R}^n$ is a weighting vector, $\epsilon_p[T] \triangleq \hat{x}_p^a[T] - \hat{x}_p[T]$ is the attack-induced estimation deviation at the time $T$, $\hat{x}_p^a[k]$ is the state estimate under attack, $g_p^a[k]$ is the $\chi^2$ detection statistic under attack, and $\Delta\alpha_p \geq 0$ is the allowed increase in false alarm rate. In practice, $\tau_p$ is chosen to yield the nominal FAR $\alpha_p$ asymptotically; over any finite window, the empirical FAR may deviate from $\alpha_p$, making a small increase $\Delta\alpha_p$ indistinguishable from natural fluctuations~\cite{milovsevivc2017analysis}. We next provide closed-form solutions to~\eqref{opt:stealthy}.
\begin{remark}
The design problem~\eqref{opt:stealthy} is formulated with respect to the primary detector $g_p$; the attacker is assumed to have no knowledge of the secondary detector $g_s$. This asymmetry in attacker knowledge is analogous to the role of watermarking schemes~\cite{weerakkody2014detecting} that enable attack detection. $\hfill \triangleleft$
\end{remark}

\vspace{-10pt}
\subsection{Problem statement}
Before formally stating the problem, we briefly introduce the \tfm\ foundation model used throughout this paper. \textit{TimesFM}~\cite{das2024decoder} is a time-series foundation model developed by Google, trained on billions of real-world time-series data points spanning diverse domains. Given a history of observations, \textit{TimesFM} predicts the next values of a time series along with an associated variance, enabling uncertainty-aware forecasting. Crucially, the model operates in a zero-shot fashion, meaning \textit{no retraining or fine-tuning on plant data} is required. However, the model is univariate in the sense that each output channel $y_i[k]$ is forecast independently, without explicitly modelling inter-channel correlations. We are now ready to formally state the problem studied in this paper.
\begin{problem}
Can time-series foundation models be used to realize the secondary detector $g_s$, so as to detect attacks on dynamical systems without any model knowledge, including attacks that are specifically designed to be stealthy against the primary model-based detector $g_p$? $\hfill \triangleleft$
\end{problem}
\vspace{-10pt}
\section{Optimal Model-based Attacks}\label{sec:attacks}
This section is organized as follows. In Proposition~\ref{prop:tau:MB}, we adopt a result from~\cite{murguia2016cusum} to derive the detector threshold $\tau_p$ for the $\chi^2$ detector given a desired nominal FAR. In Theorem~\ref{thm:impact:LTI}, we derive the closed-form solution to the optimization problem~\eqref{opt:stealthy} for LTI systems. Finally, in Theorem~\ref{thm:impact:NL}, we derive the solution to~\eqref{opt:stealthy} for nonlinear systems.

\begin{proposition}\label{prop:tau:MB}
Suppose when there are no attacks, it holds that $z_p[k] \overset{iid}{\sim} \mathcal{N}(0, \Sigma_p)$. Then the threshold $\tau_p$ yielding a desired FAR $\alpha_p$ is given by $\tau_p = 2P^{-1}\!\left(\frac{m}{2},\ 1 - \alpha_p\right)$, where $P^{-1}(\cdot,\cdot)$ is the inverse of the RLIG function.$\hfill \square$
\end{proposition}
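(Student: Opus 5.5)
The plan is to reduce the false-alarm condition to a chi-squared tail computation and then invert the chi-squared CDF. The inversion has to be read through the paper's notational convention for $P$.

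\textbf{Step 1: the law of $g_p[k]$.} Under the no-attack hypothesis, $z_p[k]\sim\mathcal{N}(0,\Sigma_p)$ with $\Sigma_p\succ 0$, so whitening applies. Set $e[k]\triangleq\Sigma_p^{-1/2}z_p[k]$, using the symmetric positive-definite square root. Then $e[k]\sim\mathcal{N}(0,I_m)$, and
\[
g_p[k]=z_p[k]^\top\Sigma_p^{-1}z_p[k]=e[k]^\top e[k]=\sum_{i=1}^m e_i[k]^2 .
\]
This is a sum of $m$ squared independent standard normals, so $g_p[k]\sim\chi^2_m$. The i.i.d.\ assumption makes the law identical for every $k$, so the FAR $\mathbb{P}(g_p[k]>\tau_p)$ does not depend on $k$.

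\textbf{Step 2: the CDF.} The $\chi^2_m$ density is $\frac{1}{2^{m/2}\Gamma(m/2)}s^{m/2-1}e^{-s/2}$ for $s>0$. Substituting $t=s/2$ in $\int_0^{\tau}(\cdot)\,\mathrm{d}s$ gives
\[
\mathbb{P}(g_p[k]\le\tau)=\frac{1}{\Gamma(m/2)}\int_0^{\tau/2}t^{m/2-1}e^{-t}\,\mathrm{d}t ,
\]
which is exactly the regularised lower incomplete gamma function of the Notation section, with upper limit $\tau/2$ and shape $m/2$.

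\textbf{Step 3: solve for $\tau_p$.} Requiring $\mathbb{P}(g_p[k]>\tau_p)=\alpha_p$ is the same as requiring the RLIG value above, evaluated at $\tau_p/2$, to equal $1-\alpha_p$. For fixed shape $m/2>0$, the RLIG is continuous and strictly increasing in its upper limit, going from $0$ to $1$, so it has a well-defined inverse. Hence $\tau_p/2=P^{-1}(\cdot)$ applied to the level $1-\alpha_p$ with shape $m/2$, which gives $\tau_p=2P^{-1}(\tfrac m2,1-\alpha_p)$ as stated.

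\textbf{Main obstacle.} The probability content of the argument is routine. The real difficulty is the argument order of $P^{-1}$: the Notation section writes $P^{-1}(q;y)$, while the statement writes $P^{-1}(\tfrac m2,1-\alpha_p)$. I will state explicitly that the statement is read with shape parameter $m/2$ and probability level $1-\alpha_p$. I will also justify strict monotonicity, which guarantees the inverse exists, from the positivity of the integrand on $(0,\infty)$.
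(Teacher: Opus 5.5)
Your proof is correct and follows the same approach as the paper. The paper gives no derivation of its own; it imports the result from \cite{murguia2016cusum}, whose argument is the same standard one you give: whiten to get $g_p[k]\sim\chi^2_m$, identify its CDF as the RLIG at $\tau/2$ with shape $m/2$, and invert using strict monotonicity. You are also right to flag the swapped arguments in $P^{-1}(\tfrac m2,1-\alpha_p)$ relative to the Notation section's $P^{-1}(q;y)$, and reading them as shape $m/2$ and level $1-\alpha_p$ is the correct fix.
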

%
\begin{remark}
The residual decomposes as $z_p[k] = v[k] + \varepsilon[k]$, where $\varepsilon[k] \triangleq h(x[k]) - \hat{y}_p[k]$ is the predictor error and $v[k]$ is the measurement noise in~\eqref{eq:plant}. In the absence of process noise ($w[k] = 0$), the state $x[k]$ evolves deterministically, so $\varepsilon[k]$ is a function of past outputs and is therefore independent of the current measurement noise $v[k]$. Under this independence, if $\varepsilon[k]$ is approximately Gaussian with covariance $\Sigma_\varepsilon$, then $z_p[k] \sim \mathcal{N}(0, \Sigma_v + \Sigma_\varepsilon)$, justifying the assumption in Proposition~\ref{prop:tau:MB} with $\Sigma_p \triangleq \Sigma_v + \Sigma_\varepsilon$. $\hfill\triangleleft$
\end{remark}

Proposition~\ref{prop:tau:MB} provides the threshold design for the $\chi^2$ detector under Gaussianity of the detection statistic, independently of any assumption on the plant dynamics. We next use this threshold to derive the optimal attack policy.
\vspace{-5pt}
\subsection{Optimal attacks against LTI systems}
We now solve the optimization problem~\eqref{opt:stealthy} for the case where the plant is LTI of the form $f(x[k]) = Ax[k]$, $h(x[k])=Cx[k]$ where $\rho(A) \leq 1$. To this end, we consider that $\Phi(\cdot)$ in~\eqref{eq:estimator} 
the following form
\begin{equation}\label{eq:observer}
    \begin{aligned}
        \hat{x}_p[k+1] &= A\hat{x}_p[k] + Kz_p[k],\\
        z_p[k] &= \tilde{y}[k] - C\hat{x}_p[k],\;\;\hat{y}_p[k] = C\hat{x}_p[k],
    \end{aligned}
\end{equation}
where $z_p[k]$ is the residual signal and $K$ is the observer gain. The gain $K$ may represent a fixed Luenberger gain or the steady-state Kalman gain, the latter being applicable when $\Sigma_w \succ 0$ and the Kalman gain has converged to its steady-state value. {Hence, it holds that $\rho(A - KC) < 1$. The following theorem presents a closed-form solution to \eqref{opt:stealthy}.}
\begin{thm}\label{thm:impact:LTI}
Consider the system~\eqref{eq:plant} in its LTI form and the corresponding observer~\eqref{eq:observer}. Suppose that $z_p[k] \overset{iid}{\sim} \mathcal{N}(0, \Sigma_p)$, and define the attack budget
\begin{equation}\label{eq:attack:budget}
    \Delta\tau_p^* \triangleq \sup\left\{\lambda \geq 0 : 
    1 - F_{\chi^2(m,\lambda)}(\tau_p) \leq \alpha_p + \Delta\alpha_p
    \right\},
\end{equation}
where $F_{\chi^2(m,\lambda)}$ is the cumulative distribution function of the non-central $\chi^2$ distribution. Then the solution to~\eqref{opt:stealthy} is given by $a^\star[k] = C\epsilon_p[k] + \delta^\star[k]$, where $\delta^\star[k] = \sqrt{\Delta\tau_p^*}\, \frac{\Sigma_p c[k]}{\sqrt{c[k]^\top \Sigma_p c[k]}}$, and $c[k] \triangleq K^\top\!\left(A^{T-1-k}\right)^\top w.$
\end{thm}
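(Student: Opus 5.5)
Write the attacked observer dynamics and subtract the nominal ones. With $\tilde y[k]=y[k]+a[k]$, the attacked residual is $z_p^a[k]=y[k]+a[k]-C\hat x_p^a[k]$. Hence
\begin{equation}
z_p^a[k]=z_p[k]+a[k]-C\epsilon_p[k],
\end{equation}
and the estimation deviation obeys
\begin{equation}
\epsilon_p[k+1]=A\epsilon_p[k]+K\bigl(a[k]-C\epsilon_p[k]\bigr).
\end{equation}
Define $\delta[k]\triangleq a[k]-C\epsilon_p[k]$. This is the change of variables the statement suggests, $a[k]=C\epsilon_p[k]+\delta[k]$. It reduces the problem to
\begin{equation}
z_p^a[k]=z_p[k]+\delta[k],\qquad \epsilon_p[k+1]=A\epsilon_p[k]+K\delta[k].
\end{equation}
Since $\epsilon_p[k_a]=0$, the deviation at time $T$ is
\begin{equation}
\epsilon_p[T]=\sum_{k=k_a}^{T-1}A^{T-1-k}K\delta[k],
\end{equation}
so the objective is $\mathbb{E}[w^\top\epsilon_p[T]]=\sum_k c[k]^\top\mathbb{E}[\delta[k]]$, with $c[k]$ as in the statement. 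The map $a\mapsto\delta$ is invertible causally, so optimizing over $a$ is equivalent to optimizing over $\delta$.

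Next we handle the constraint. I would restrict to deterministic $\delta$, or more generally $\delta[k]$ measurable with respect to information independent of $z_p[k]$. Then $z_p^a[k]\sim\mathcal N(\delta[k],\Sigma_p)$, so $g_p^a[k]$ follows a non-central $\chi^2(m,\lambda_k)$ distribution with noncentrality $\lambda_k=\delta[k]^\top\Sigma_p^{-1}\delta[k]$. The alarm probability $1-F_{\chi^2(m,\lambda)}(\tau_p)$ is increasing in $\lambda$ (a standard monotonicity fact for the non-central $\chi^2$). The constraint is therefore exactly $\lambda_k\le\Delta\tau_p^*$, where $\Delta\tau_p^*$ is defined in \eqref{eq:attack:budget} and $\tau_p$ comes from Proposition~\ref{prop:tau:MB}. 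The constraint is per time step and the objective is separable, so the problem decouples into $T-k_a$ independent problems of the form
\begin{equation}
\max_{\delta}\ c^\top\delta \quad\text{s.t.}\quad \delta^\top\Sigma_p^{-1}\delta\le\Delta\tau_p^*.
\end{equation}
By Cauchy--Schwarz in the $\Sigma_p^{-1}$ inner product, we have $c^\top\delta=(\Sigma_p c)^\top\Sigma_p^{-1}\delta\le\sqrt{c^\top\Sigma_p c}\sqrt{\Delta\tau_p^*}$. Equality holds at $\delta^\star=\sqrt{\Delta\tau_p^*}\,\Sigma_p c/\sqrt{c^\top\Sigma_p c}$, which gives the claimed formula. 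Mapping back yields $a^\star[k]=C\epsilon_p[k]+\delta^\star[k]$. Here $\epsilon_p[k]$ is computable by the attacker, since it has the model and can run the deviation recursion driven by its own $\delta$.

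The main obstacle is justifying the restriction to $\delta$ that is deterministic, or independent of the current noise. A general attack could correlate $\delta[k]$ with $z_p[k]$, which the attacker may partially know through $y[k]$. With such correlation the distribution of $g_p^a$ is no longer non-central $\chi^2$. I would argue that this cannot help in expectation. For any randomized $\delta[k]$, Jensen applied to $\mathbb{E}[\delta]$ together with a convexity property of the alarm probability in the mean shift should show that a deterministic $\delta$ with the same mean is at least as stealthy. If that argument becomes delicate, I would make explicit the implicit assumption that the attacker chooses $\delta$ independently of the current noise, which the hypothesis $z_p[k]\overset{iid}{\sim}\mathcal N(0,\Sigma_p)$ suggests. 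A secondary minor point is the edge case $c[k]=0$, where any feasible $\delta[k]$ is optimal, for instance $\delta[k]=0$.
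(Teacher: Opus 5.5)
Your main argument is the paper's proof. It uses the substitution $\delta[k]=a[k]-C\epsilon_p[k]$, the deviation recursion $\epsilon_p[k+1]=A\epsilon_p[k]+K\delta[k]$, the restriction to deterministic $\delta[k]$, the non-central $\chi^2$ reduction with monotonicity in $\lambda$, per-step decoupling, and Cauchy--Schwarz in the $\Sigma_p^{-1}$ inner product. The paper also imposes the deterministic restriction by fiat. The $c[k]=0$ remark is a correct minor addition.

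\textbf{Where the proposal goes wrong.} Your plan for removing that restriction would fail, and the restriction is essential rather than technical.

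\emph{Jensen gives nothing.} The alarm probability is $\phi(\delta)=1-F_{\chi^2(m,\delta^\top\Sigma_p^{-1}\delta)}(\tau_p)$. It is bounded above by $1$ and non-constant on $\mathbb{R}^m$, so it cannot be convex.

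\emph{Randomization makes \eqref{opt:stealthy} unbounded.} Fix $k$ with $c[k]\neq 0$. Using a coin independent of the noise, set $\delta[k]=D\,\Sigma_p c[k]$ with probability $p$ and $\delta[k]=0$ otherwise. Then $\mathbb{P}(g_p^a[k]>\tau_p)\le(1-p)\alpha_p+p$. This equals $\alpha_p+\Delta\alpha_p$ for $p=\Delta\alpha_p/(1-\alpha_p)$. Yet the objective is $p\,D\,c[k]^\top\Sigma_p c[k]$, which tends to $+\infty$ as $D\to\infty$. So whenever $\Delta\alpha_p>0$, the theorem holds only over deterministic $\delta$. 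Alternatively, it holds if the stealthiness constraint is imposed conditionally on the information $\delta[k]$ depends on, with $z_p[k]$ independent of that information; then your bound $\lambda_k\le\Delta\tau_p^*$ holds almost surely.

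\emph{The ``more generally'' clause fails for the same reason.} If $\delta[k]$ is random but independent of $z_p[k]$, then $z_p^a[k]$ is only \emph{conditionally} $\mathcal{N}(\delta[k],\Sigma_p)$. The unconditional constraint in \eqref{opt:stealthy} then bounds $\mathbb{E}[\phi(\delta[k])]$, not $\lambda_k$ pointwise. Your fallback of stating the deterministic restriction as an explicit assumption is the correct fix.
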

\begin{proof}
Under attack, $z_p^a[k] = \tilde{y}[k] - C\hat{x}_p^a[k]$. By basic algebra, it follows that $z_p^a[k] = z_p[k] + \delta[k]$, where $\delta[k] \triangleq a[k] - C(\hat{x}_p^a[k] - \hat{x}_p[k])$. Subtracting the nominal observer from the attacked observer dynamics gives $\epsilon_p[k+1] = A\epsilon_p[k] + K\delta[k]$, $\epsilon_p[k_a] = 0$, which unrolls as $\epsilon_p[T] = \sum_{k=k_a}^{T-1}A^{T-1-k}K\delta[k]$. Since $\delta[k]$ is the attacker's design variable, we restrict attention to deterministic $\delta[k]$. Taking expectation on both sides, and using linearity of expectation, it follows that $ \mathbb{E}[w^\top\epsilon_p[T]] = \textstyle \sum_{k=k_a}^{T-1} c[k]^\top\delta[k]$, where $c[k]$ is given in the theorem statement.

Since $z_p[k] \sim \mathcal{N}(0, \Sigma_p)$ and $\delta[k]$ is deterministic, $z_p^a[k] = z_p[k] + \delta[k] \sim \mathcal{N}(\delta[k], \Sigma_p)$. Define $\tilde{z}_p^a[k] \triangleq \Sigma_p^{-1/2}z_p^a[k] \sim \mathcal{N}(\Sigma_p^{-1/2}\delta[k], I_m)$. Then $ g_p^a[k] = \|\tilde{z}_p^a[k]\|^2 $ which is the sum of squares of $m$ independent Gaussian random variables with unit variance and means $(\Sigma_p^{-1/2}\delta[k])_i$. By definition of the non-central $\chi^2$ distribution
\begin{equation}
g_p^a[k] \sim \chi^2(m,\, \lambda), \; \lambda \triangleq \|\Sigma_p^{-1/2}\delta[k]\|^2 = \delta[k]^\top\Sigma_p^{-1}\delta[k].
\end{equation}
Therefore $P(g_p^a[k] > \tau_p) = 1 - F_{\chi^2(m,\lambda)}(\tau_p)$. Since $1 - F_{\chi^2(m,\lambda)}(\tau_p)$ is strictly increasing in $\lambda$, the constraint $P(g_p^a[k] > \tau_p) \leq \alpha_p + \Delta\alpha_p$ is equivalent to $\lambda \leq \Delta\tau_p^*$, i.e.,
$\delta[k]^\top\Sigma_p^{-1}\delta[k] \leq \Delta\tau_p^*$, which decouples across time steps, then problem~\eqref{opt:stealthy} also decouples as
\begin{equation}
    \max_{\delta[k]}\ c[k]^\top\delta[k] \quad \text{s.t.} \quad 
    \delta[k]^\top\Sigma_p^{-1}\delta[k] \leq \Delta\tau_p^*.
\end{equation}
Applying the Cauchy--Schwarz inequality in the $\Sigma_p^{-1}$-weighted inner product gives
%
\begin{equation}
\scalebox{0.8}{$c[k]^\top\delta[k] \leq \sqrt{c[k]^\top\Sigma_p c[k]}\cdot \sqrt{\delta[k]^\top\Sigma_p^{-1}\delta[k]} \leq \sqrt{\Delta\tau_p^*}\sqrt{c[k]^\top\Sigma_p c[k]}$}.
\end{equation}
Equality holds when $\Sigma_p^{-1/2}\delta[k] = \frac{\sqrt{\Delta\tau_p^*}\,\Sigma_p^{1/2}c[k]}{\sqrt{c[k]^\top\Sigma_p c[k]}}$, giving $\delta^\star[k]$ in the theorem statement, completing the proof.
\end{proof}

Theorem~\ref{thm:impact:LTI} provides a closed-form stealthy attack policy for the case where the operator employs a static observer. We now extend this result to the nonlinear setting, where the operator employs an Extended Kalman Filter (EKF) and the attack design must account for the time-varying gain.
\vspace{-10pt}
\subsection{Optimal attacks against nonlinear systems}
We now address~\eqref{opt:stealthy} for the case where the plant is nonlinear. To this end, we consider that $\Phi(\cdot)$ in~\eqref{eq:estimator} is an EKF in the numerically stable Joseph form as follows
\begin{equation}\label{eq:ekf}
    \begin{aligned}
        \hat{x}_p[k|k-1] &= f(\hat{x}_p[k-1|k-1]),\\
        P_{k|k-1} &= F_kP_{k-1|k-1}F_k^\top + \Sigma_w,\\
        S_k &= H_kP_{k|k-1}H_k^\top + \Sigma_v,\\
        K_k &= P_{k|k-1}H_k^\top S_k^{-1},\\
        z_p[k] &= \tilde{y}[k] - h(\hat{x}_p[k|k-1]),
    \end{aligned}
\end{equation}
\begin{equation}
        \begin{aligned}
                \hat{x}_p[k|k] &= \hat{x}_p[k|k-1] + K_k z_p[k],\\
        P_{k|k} &= G_kP_{k|k-1}G_k^\top + K_k\Sigma_vK_k^\top,
    \end{aligned}
\end{equation}
where $G_k = I - K_kH_k$, $F_k = \partial f/\partial x|_{\hat{x}_p[k-1|k-1]}$ and $H_k = \partial h/\partial x|_{\hat{x}_p[k|k-1]}$ are the Jacobians of $f$ and $h$ evaluated at the current estimates, $P_{k|k-1}$ is the predicted error covariance, $S_k$ is the innovation covariance, and $K_k$ is the time-varying EKF gain. Let $\hat{x}_p^a[k|k]$ denote the EKF state estimate under attack, and define the update deviation $\epsilon_p[k|k] \triangleq \hat{x}_p^a[k|k] - \hat{x}_p[k|k]$. We are now ready to state the main result.
\begin{thm}\label{thm:impact:NL}
Consider the system~\eqref{eq:plant} and the corresponding EKF~\eqref{eq:ekf}. Suppose that the EKF has converged and that the innovation sequence satisfies $z_p[k] \approx \mathcal{N}(0, S_k)$, where $S_k$ is the innovation covariance from~\eqref{eq:ekf}. Consider the attack budget in \eqref{eq:attack:budget}, and suppose that the attack-induced deviation $\epsilon_p[k|k]$ remains small enough. Then an approximate solution to~\eqref{opt:stealthy} is given by $a^\star[k] = h(\hat{x}_p[k|k-1] + \epsilon_p[k|k-1]) - h(\hat{x}_p[k|k-1]) + \delta^\star[k]$, where $\delta^\star[k] = \sqrt{\Delta\tau^*}\, \frac{S_k c[k]}{\sqrt{c[k]^\top S_k c[k]}}$, with $c[k]$ computed via the backward recursion $\lambda[T] = w$,
\begin{equation}
\scalebox{0.9}{$c[k] = K_k^\top \lambda[k+1], \; \lambda[k] = F_k^\top \lambda[k+1], \; k = T-1, \dots, k_a.$}
\end{equation}
\end{thm}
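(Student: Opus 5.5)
The plan is to follow the structure of the proof of Theorem~\ref{thm:impact:LTI}, replacing the exact linear error dynamics by a first-order linearization about the nominal EKF trajectory.

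\textbf{Step 1: residual under attack.} Under attack, the residual is $z_p^a[k] = y[k] + a[k] - h(\hat{x}_p[k|k-1]+\epsilon_p[k|k-1])$, where $\epsilon_p[k|k-1] \triangleq \hat{x}_p^a[k|k-1]-\hat{x}_p[k|k-1]$. Substituting the proposed form of $a[k]$ gives $z_p^a[k] = z_p[k] + \delta[k]$ exactly. So the $h$-difference term cancels the attack-induced shift in the predicted output, exactly as $C\epsilon_p[k]$ does in the LTI case.

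\textbf{Step 2: deviation dynamics.} Subtracting the nominal EKF update from the attacked update gives
$\epsilon_p[k|k] = \epsilon_p[k|k-1] + K_k\delta[k]$.
Linearizing the prediction step, $f(\hat{x}^a) - f(\hat{x}) \approx F_{k+1}\epsilon_p[k|k]$, which is valid under the assumption that $\epsilon_p$ stays small. This yields
$\epsilon_p[k+1|k] \approx F_{k+1}\epsilon_p[k|k]$.
I will also freeze the gains $K_k$ and Jacobians $F_k$ at their nominal values. This is justified by convergence of the EKF and by the fact that the covariance recursion depends on the data only through the Jacobians, which change only at second order in $\epsilon_p$. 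With $\epsilon_p[k_a|k_a-1]=0$, unrolling gives
$\epsilon_p[T] \approx \sum_{k=k_a}^{T-1} \Psi(T,k) K_k \delta[k]$,
where $\Psi$ is the product of the Jacobians $F$ (with indices aligned to the paper's recursion). Then $w^\top\epsilon_p[T] \approx \sum_k c[k]^\top \delta[k]$ with $c[k]^\top = w^\top \Psi(T,k) K_k$. Evaluating these products right-to-left is exactly the adjoint backward recursion $\lambda[T]=w$, $\lambda[k]=F_k^\top\lambda[k+1]$, $c[k]=K_k^\top\lambda[k+1]$. Taking expectations with deterministic $\delta[k]$ preserves this linear objective.

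\textbf{Step 3: constraint and per-step optimization.} Since $z_p[k]\approx\mathcal{N}(0,S_k)$, the attacked residual satisfies $z_p^a[k]\approx\mathcal{N}(\delta[k],S_k)$. Hence $g_p^a[k]\sim\chi^2(m,\delta[k]^\top S_k^{-1}\delta[k])$, taking $\Sigma_p = S_k$ in the detector. Monotonicity of the noncentral $\chi^2$ tail in the noncentrality parameter turns the FAR constraint into $\delta[k]^\top S_k^{-1}\delta[k]\le\Delta\tau^*$, which decouples across $k$. Each per-step problem is then maximized by Cauchy--Schwarz in the $S_k^{-1}$-weighted inner product, exactly as in Theorem~\ref{thm:impact:LTI}. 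This gives $\delta^\star[k]$ as stated.

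\textbf{Main obstacle.} The hard part is Step 2: making the linearization and the gain freezing precise. The gains $K_k$ under attack depend on the attacked estimates through the Jacobians $H_k$ and $F_k$, so the ``approximate'' qualifier is essential. I would bound the neglected terms by $O(\|\epsilon_p\|^2)$ using smoothness of $f$ and $h$, and state the result as optimality of the first-order model rather than of the exact problem.
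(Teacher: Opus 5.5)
Your proposal follows the paper's proof essentially step for step: the choice of $\delta[k]$ making $z_p^a[k]=z_p[k]+\delta[k]$, then a first-order linearization of the deviation dynamics encoded by the adjoint recursion for $c[k]$, then the per-step noncentral-$\chi^2$ constraint with $S_k$ in place of $\Sigma_p$ and Cauchy--Schwarz. Your linearization $\epsilon_p[k+1|k]\approx F_{k+1}\epsilon_p[k|k]$ with $\epsilon_p[k|k]=\epsilon_p[k|k-1]+K_k\delta[k]$ is in fact more carefully indexed than the paper's $\epsilon_p[k+1|k]\approx F_k\epsilon_p[k|k-1]+K_k\delta[k]$; with it, the stated recursion gives exactly $w^\top\epsilon_p[T-1|T-1]\approx\sum_{k=k_a}^{T-1}c[k]^\top\delta[k]$, which settles the index alignment you deferred.

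One correction to your gain-freezing argument: the Jacobians, and hence the gain $K_k^a$ computed along the attacked estimates, shift at first order (not second) in $\epsilon_p$. So the dropped term $(K_k^a-K_k)\,z_p^a[k]$ is not $O(\|\epsilon_p\|^2)$ pathwise and cannot be bounded that way. It is negligible only in expectation: $K_k^a-K_k$ is determined by the past while $z_p[k]$ is approximately zero-mean given the past, leaving an $O(\|\epsilon_p\|\,\|\delta[k]\|)$ remainder. That suffices because the objective is an expectation. The paper does not address this point and simply asserts the linearized recursion.
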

\begin{proof}
Under attack, adding and subtracting $h(\hat{x}_p[k|k-1])$ from the attacked innovation gives $z_p^a[k] = z_p[k] + \delta[k]$, where $\delta[k] \triangleq a[k] - (h(\hat{x}_p[k|k-1] + \epsilon_p[k|k-1]) - h(\hat{x}_p[k|k-1]))$. Applying a first-order linearization of the deviation dynamics around the nominal trajectory gives $\epsilon_p[k+1|k] \approx F_k\epsilon_p[k|k-1] + K_k\delta[k]$, and introducing the backward recursion $\lambda[T] = w$, $\lambda[k] = F_k^\top\lambda[k+1]$, $c[k] = K_k^\top\lambda[k+1]$ gives $\mathbb{E}[w^\top\epsilon_p[T|T-1]] \approx \sum_{k=k_a}^{T-1} c[k]^\top\delta[k]$. Since $z_p[k] \approx \mathcal{N}(0, S_k)$, the stealthiness constraint $P(g_p^a[k] > \tau_p) \leq \alpha_p + \Delta\alpha_p$ is equivalent to $\delta[k]^\top S_k^{-1}\delta[k] \leq \Delta\tau^*$ by the same argument as in the proof of Theorem~\ref{thm:impact:LTI}, with $\Sigma_p$ replaced by $S_k$. The remainder of the proof follows identically {and is omitted due to space limitations.} 
\end{proof}

Theorem~\ref{thm:impact:NL} provides a closed-form attack policy for nonlinear systems. We next show that such attacks are in practice stealthy against model-based detectors, and we introduce a \tfm-based algorithm to enhance detection.
\begin{remark}
While the small-deviation assumption in Theorem~\ref{thm:impact:NL} is difficult to verify analytically, the numerical results on the IEEE 14-bus system confirm that the constructed attack is consistently stealthy against the EKF-based detector across all trials, validating the approximation in practice. $\hfill\triangleleft$
\end{remark}
\vspace{-10pt}
\section{\tfm-based Attack Detection}\label{sec:detection}
In this section, we present a data-driven attack detection algorithm using \tfm\ to realize the secondary detector $g_s$. The algorithm is inspired by the classical $\chi^2$ detector, but replaces the model-based residuals with prediction residuals generated by \tfm. As summarized in Algorithm~\ref{alg:LTI:detection}, the approach proceeds in three phases. In the first phase, a warmup period allows \tfm\ to observe sufficient data to form meaningful predictions. Specifically, the oracle \tfm\ takes a window of $L$ past sensor outputs and predicts the next output $\hat{y}$ (line~$5$ in Algorithm~\ref{alg:LTI:detection}). In the second phase, a clean window of sensor data is used to estimate the covariance of the \tfm\ prediction residuals under nominal operation. In the third phase, this estimated covariance is used to construct a $\chi^2$-type test statistic for attack detection. Crucially, the context buffer is only updated when no alarm is raised, preventing corrupted measurements from contaminating future predictions (line $12$ in Algorithm~\ref{alg:LTI:detection}). The algorithm applies to both linear and nonlinear systems, requiring no model knowledge. We next depict the efficacy of Algorithm~\ref{alg:LTI:detection} using numerical experiments. The code to reproduce all the results can be found in \url{https://github.com/balajianand1994/Attack_detection_using_TFM.git}
\setlength{\floatsep}{4pt}
\setlength{\textfloatsep}{4pt}
\begin{algorithm}[t]
\caption{Attack Detection using \textit{TimesFM}}\label{alg:LTI:detection}
\begin{algorithmic}[1]
\Require Sensor outputs $\tilde{y}_i[k]$, context length $L$, 
         warmup length $T_w$, clean window length $T_c$, FAR $\alpha_s$
\Ensure Alarm signal at each time step $k$
\For{$k = 1, \ldots, T_w$} \Comment{Warmup}
    \State Store $\tilde{y}[k]$ in buffer $\mathcal{B}$
\EndFor
\For{$k = T_w+1, \ldots, T_w+T_c$} 
    \State $r[k] \leftarrow \tilde{y}[k] - \texttt{TimesFM}\big(\mathcal{B}[k{-}L:k{-}1]\big)$; update $\mathcal{B}$
\EndFor
\State $\hat{\Sigma} \leftarrow \frac{1}{T_c-1}\sum_{k=T_w+1}^{T_w+T_c} r[k]r[k]^\top$; 
       set $\tau_s$ via Proposition~\ref{prop:tau:MB}
\For{$k = T_w+T_c+1, \ldots$} \Comment{Online detection}
    \State $\hat{y}[k] \leftarrow \texttt{TimesFM}\big(\mathcal{B}[k{-}L:k{-}1]\big)$
    \State $g_s[k] \leftarrow (\tilde{y}[k]{-}\hat{y}[k])^\top \hat{\Sigma}^{-1} (\tilde{y}[k]{-}\hat{y}[k])$
    \If{$g_s[k] > \tau_s$} \textbf{raise alarm}
    \Else\ update $\mathcal{B}$ with $\theta\tilde{y}[k] + (1{-}\theta)\hat{y}[k]$
    \EndIf
\EndFor
\end{algorithmic}
\end{algorithm}

\vspace{-5pt}
\subsection{Numerical results: Linear case}
We consider an undamped mass-spring system of the form~\eqref{eq:plant} with $f(x[k]) = Ax[k]$ and $h(x[k]) = Cx[k]$, where
\begin{equation}
    A = \begin{bmatrix} \cos(\omega \Delta t) & \frac{\sin(\omega \Delta t)}{\omega} \\ -\omega\sin(\omega \Delta t) & \cos(\omega \Delta t) \end{bmatrix}, \; C = \begin{bmatrix} 1 & 0 \\ 0.31 & -0.48\\ -0.21 & 0.43 \end{bmatrix},
\end{equation}
with $\omega = 0.3$, $\Delta t = 1$, and $x[0] = \begin{bmatrix} 1 & 0 \end{bmatrix}^\top$. Here, the state $x[k] \in \mathbb{R}^2$ represents the displacement and velocity of the mass, and $y[k] \in \mathbb{R}^3$ represents three linear sensor measurements of the state. We assume no process noise and Gaussian measurement noise with standard deviation $\sigma = 0.01$. For the primary detector, we design a Luenberger observer with gain $K$ placing the closed-loop poles of $A - KC$ at $0.5 \pm 0.1j$, with threshold $\tau_p = 12.838$ corresponding to a nominal FAR of $0.5\%$. For the secondary detector, we use Algorithm~\ref{alg:LTI:detection} with $L = 50$, $T_w = 51$, $T_c = 20$, and $\theta = 0.8$.
\subsubsection{Model-free replay attacks}
We consider a replay attack on all sensors simultaneously, which is particularly challenging to detect since no sensor provides a clean reference. The attacker records data over $[\kappa_0, \kappa_1] = [50, 70]$ and begins replaying at $k_a = 71$. The detection results and the attacked sensor trajectory are shown in Fig.~\ref{fig:LTI:results}. During nominal operation, \tfm\ achieves a FAR of $0.5\%$, compared to $2.8\%$ for the observer-based detector. Under attack, \tfm\ consistently raises alarms from $k = 91$ onward, while the observer-based detector raises alarms only over $k \in [91, 93]$, which can be dismissed as false alarms given its higher nominal FAR. Overall, \tfm\ achieves both a lower FAR and more sustained attack detection.
%
%
\subsubsection{Model-based stealthy attacks}
We consider a stealthy attack constructed using Theorem~\ref{thm:impact:LTI} with $\Delta\tau_p = 0.385$, starting at $k_a = 81$ and running until $k_2 = 100$. The results are shown in Fig.~\ref{fig:LTI:results}. During nominal operation, the FAR of \tfm\ remains consistent with the replay-attack scenario. Under attack, the EKF-based detector raises alarms indistinguishably from its nominal FAR, as expected by construction. In contrast, the \tfm-based detector successfully detects the attack, demonstrating that attacks designed to be stealthy against model-based detectors are not stealthy against \tfm.
%
%
\begin{figure}[t]
    \centering
    \includegraphics[width=8cm]{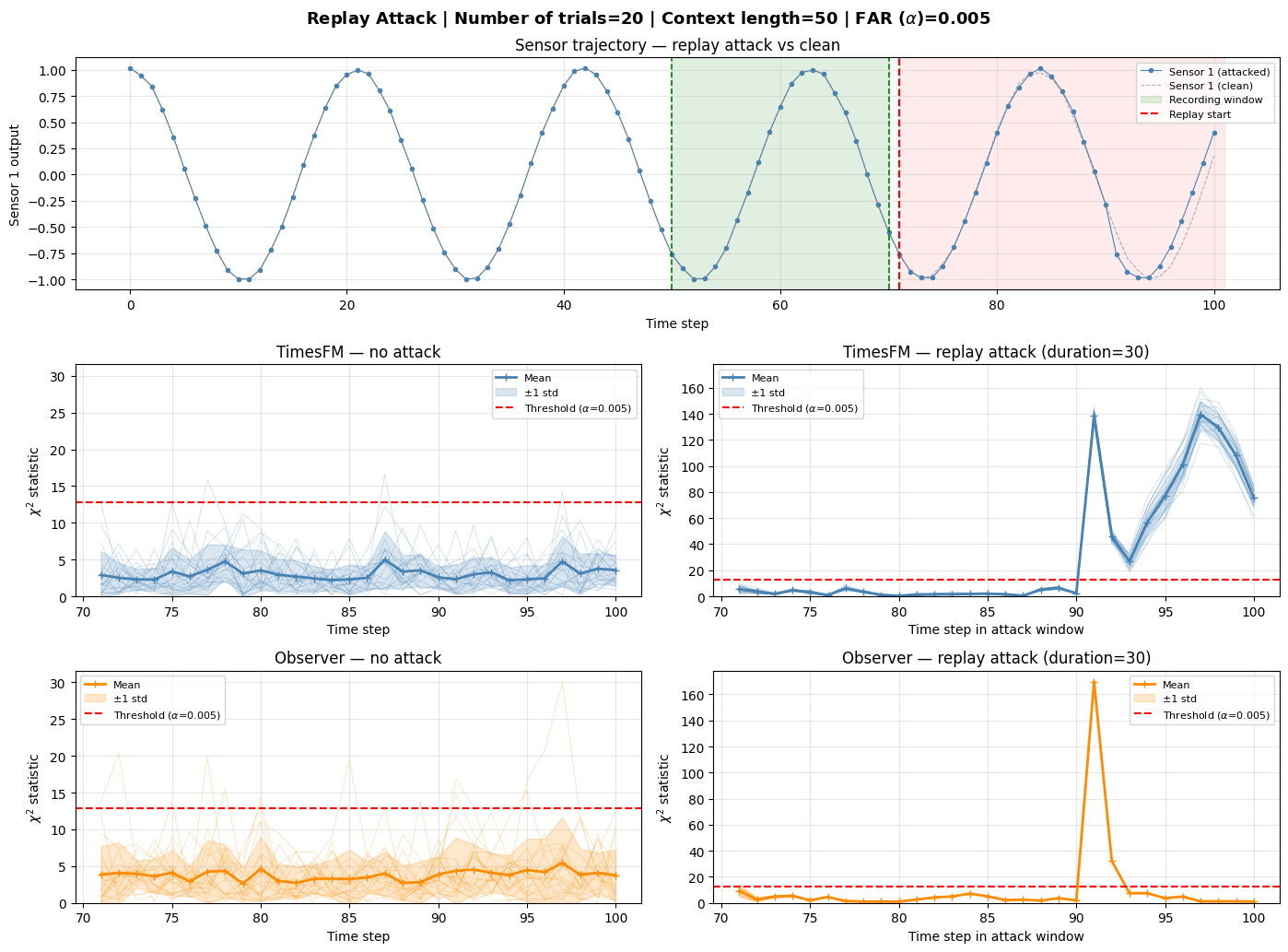}\\
    \vspace{4pt}
    \includegraphics[width=8cm]{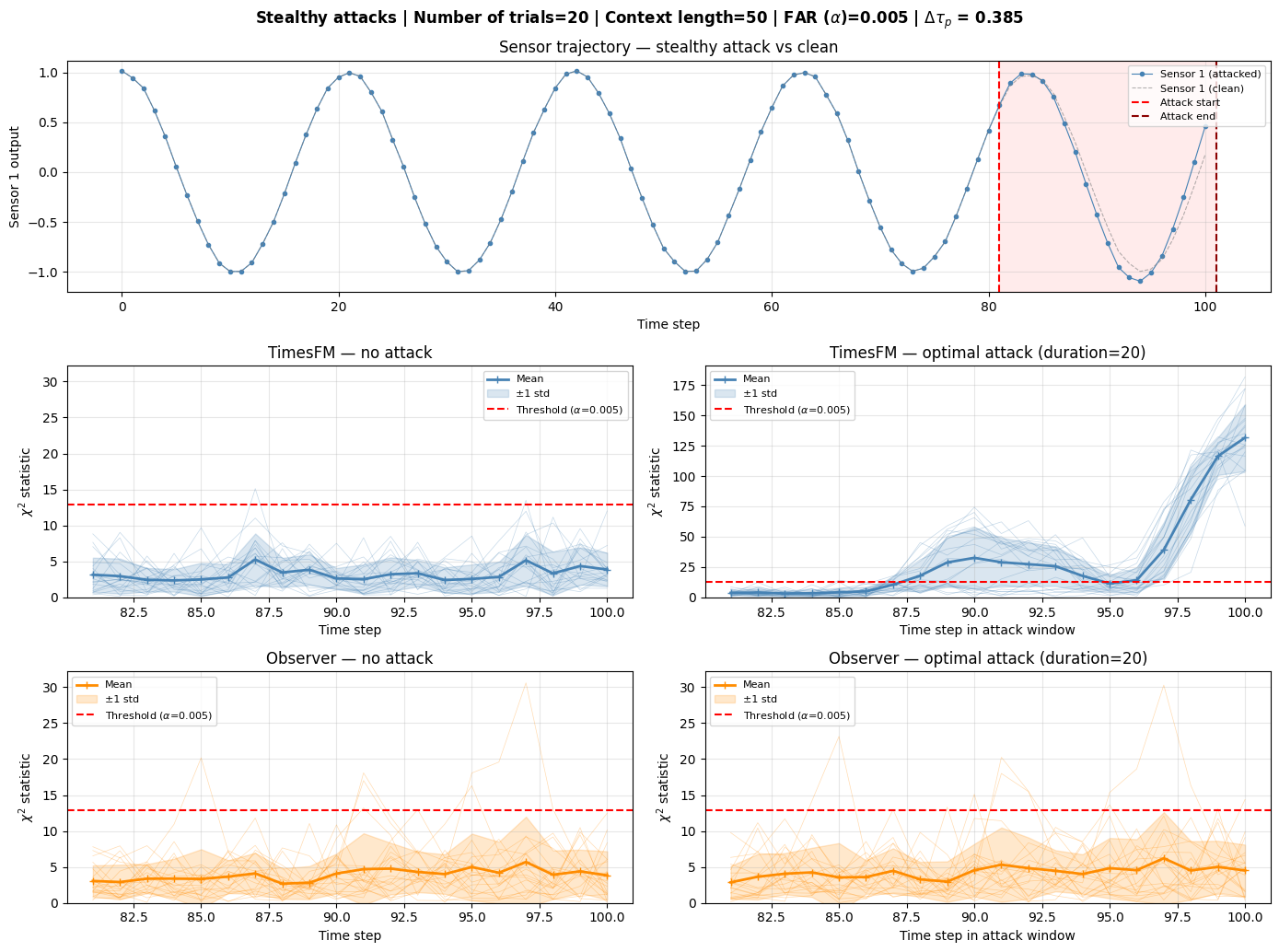}
    \caption{\small Evaluation of Algorithm~\ref{alg:LTI:detection} under replay attacks (top three rows) and stealthy attacks constructed using Theorem~\ref{thm:impact:LTI} (bottom three rows). For each attack type: (Top) sensor output $y_1[k]$ during nominal operation and attack phase; (Middle) mean and standard deviation of the test statistic across $20$ trials alongside threshold $\tau_p$, for the \tfm-based detector under nominal operation (left) and under attack (right); (Bottom) same for the observer-based detector.}
    \label{fig:LTI:results}
\end{figure}
\subsubsection{Robust state estimation}
We now consider a single-sensor setting with $C = \begin{bmatrix} 1 & 0 \end{bmatrix}$, where the only available sensor is under attack.
{Note that existing model-based secure state estimation methods do not apply due to the fundamental limitation~\cite{nakahira2018attack} as they require sufficient observability redundancy.}
Instead, we suggest a different mitigation approach: once an attack is detected by Algorithm~\ref{alg:LTI:detection}, we replace the corrupted measurement with the \tfm\ point prediction and feed this into the observer. The results are shown in Fig.~\ref{fig:mitigation}. Even when the only sensor is compromised, the \tfm-protected observer consistently achieves a lower state reconstruction error than the unprotected observer, demonstrating the practical value of \tfm\ as a mitigation tool beyond detection alone.
\begin{figure}[tb]
    \centering
    \includegraphics[width=6cm]{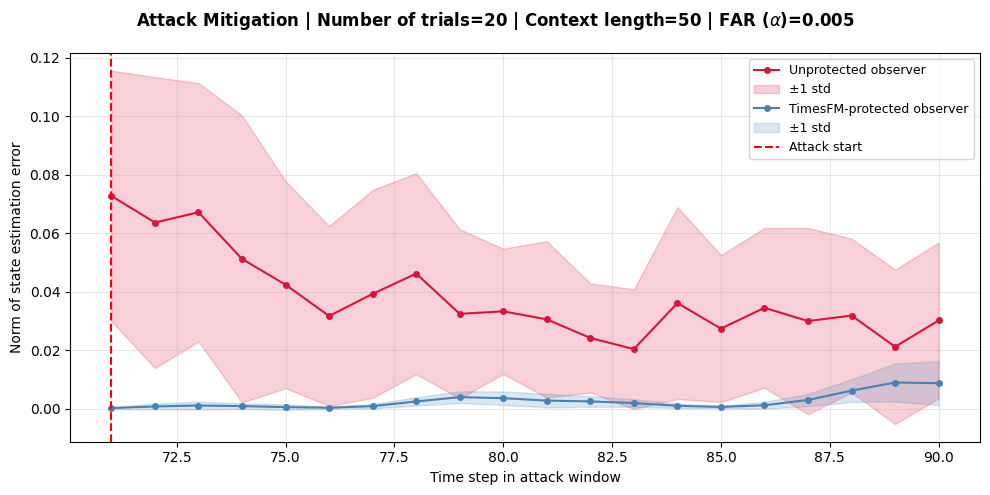}
    \caption{\small Mean and standard deviation of the state reconstruction error across $20$ trials for the unprotected observer (red) and the \tfm-protected observer (blue), where corrupted measurements are replaced by \tfm\ point predictions upon detection.}
    \label{fig:mitigation}
\end{figure}
\subsubsection{Robustness against an adaptive attacker}
\label{sec:adaptive}
We now consider an attacker with black-box query access to $g_s$, who simultaneously attempts to evade both detectors. We design attacks via SPSA~\cite{s2013stochastic}, warm-started using the attack in Theorem~1 and refined over $20$ iterations, scaled to satisfy the empirical FAR constraint on $g_s$. We consider the single-sensor setting $C = 
\begin{bmatrix}1 & 0\end{bmatrix}$ over $N=10$ trials. The oblivious attacker of Theorem~1, designed against 
$g_p$ alone, achieves a state deviation of $0.637$. Under simultaneous evasion, the mean state deviation is $0.007$, a reduction of $91\times$, confirming that $g_s$ severely limits the adaptive attacker's impact even under query access.

\vspace{-5pt}
\subsection{Numerical results: IEEE 14-bus power system}
We next demonstrate the efficacy of Algorithm~\ref{alg:LTI:detection} on the nonlinear IEEE 14-bus power system, which consists of $14$ buses and $20$ transmission lines. The dynamics of bus $i \in \{1,\dots,14\}$ are governed by the swing equation
\begin{equation}\label{bus_dyn}
    m_i \ddot{p}_i + h_i \dot{p}_i = \textstyle -\sum_{j \in \mathcal{N}_i} B_{ij} \sin(p_i - p_j),
\end{equation}
where $m_i > 0$ and $h_i \geq 0$ are the inertia and damping coefficients of bus $i$, $B_{ij} > 0$ is the susceptance of the line connecting buses $i$ and $j$, and $p_i$ denotes the phase angle of bus $i$. Bus~$1$ is the reference bus with $p_1 \equiv 0$, and the system is discretized with step $\Delta t = 0.02$\,s via forward Euler. The measurement vector $y[k] \in \mathbb{R}^{34}$ consists of all $14$ bus frequencies and all $20$ line active power flows, with measurement noise covariance $\Sigma_v = \mathrm{diag}(\sigma_\omega^2 \mathbf{1}_{14}, \sigma_f^2 \mathbf{1}_{20})$, where $\sigma_\omega = 0.035$ and $\sigma_f = 0.050$. For the primary detector, we design an EKF in~\eqref{eq:ekf} with  $\alpha_p = 0.1\%$. For Algorithm~\ref{alg:LTI:detection} we use $L = 100$, $T_w = 110$, $\theta = 1$, and $T_c = 200$.
\begin{remark}
The threshold $\tau_s$ in Algorithm~\ref{alg:LTI:detection} is designed  assuming approximately Gaussian residuals under nominal operation. A Shapiro-Wilk test on the nominal residuals confirms this: $32$ ($31$) out of $34$ channels pass at the $5\%$ significance level for the EKF (\tfm)-based detector, with $W \in [0.990, 0.998]$ across all channels, indicating 
only mild departures from Gaussianity. $\hfill\triangleleft$
\end{remark}
\subsubsection{Model-free replay attacks}
We consider a replay attack applied simultaneously to all sensors. The attacker records data over $[\kappa_0, \kappa_1] = [60, 70]$ and begins replaying at $k_a = 71$. The detection results and the attacked sensor trajectory are shown in Fig.~\ref{fig:NL:results}. Under attack, \tfm\ consistently raises alarms throughout the attack period. The EKF-based detector raises alarms only during the initial phase of the attack, after which the replayed data becomes indistinguishable from nominal behaviour in the model-based residual. We note that the nominal FAR of \tfm\ is slightly elevated in this experiment due to the limited covariance estimation horizon; a longer clean window $T_c$ is expected to reduce it.
\subsubsection{Model-based stealthy attacks}
We consider a stealthy attack constructed using Theorem~\ref{thm:impact:NL} with $\Delta\tau_p = 5.87$, starting at $k_a = 111$ and running until $k_2 = 161$. The results are shown in Fig.~\ref{fig:NL:results}. Under attack, the EKF-based detector raises no alarms, as expected. In contrast, the \tfm-based detector consistently detects the attack across all trials, demonstrating that attacks designed to be stealthy against model-based detectors are not stealthy against \tfm, even for nonlinear systems.
\begin{figure}[tb]
    \centering
    \includegraphics[width=8cm]{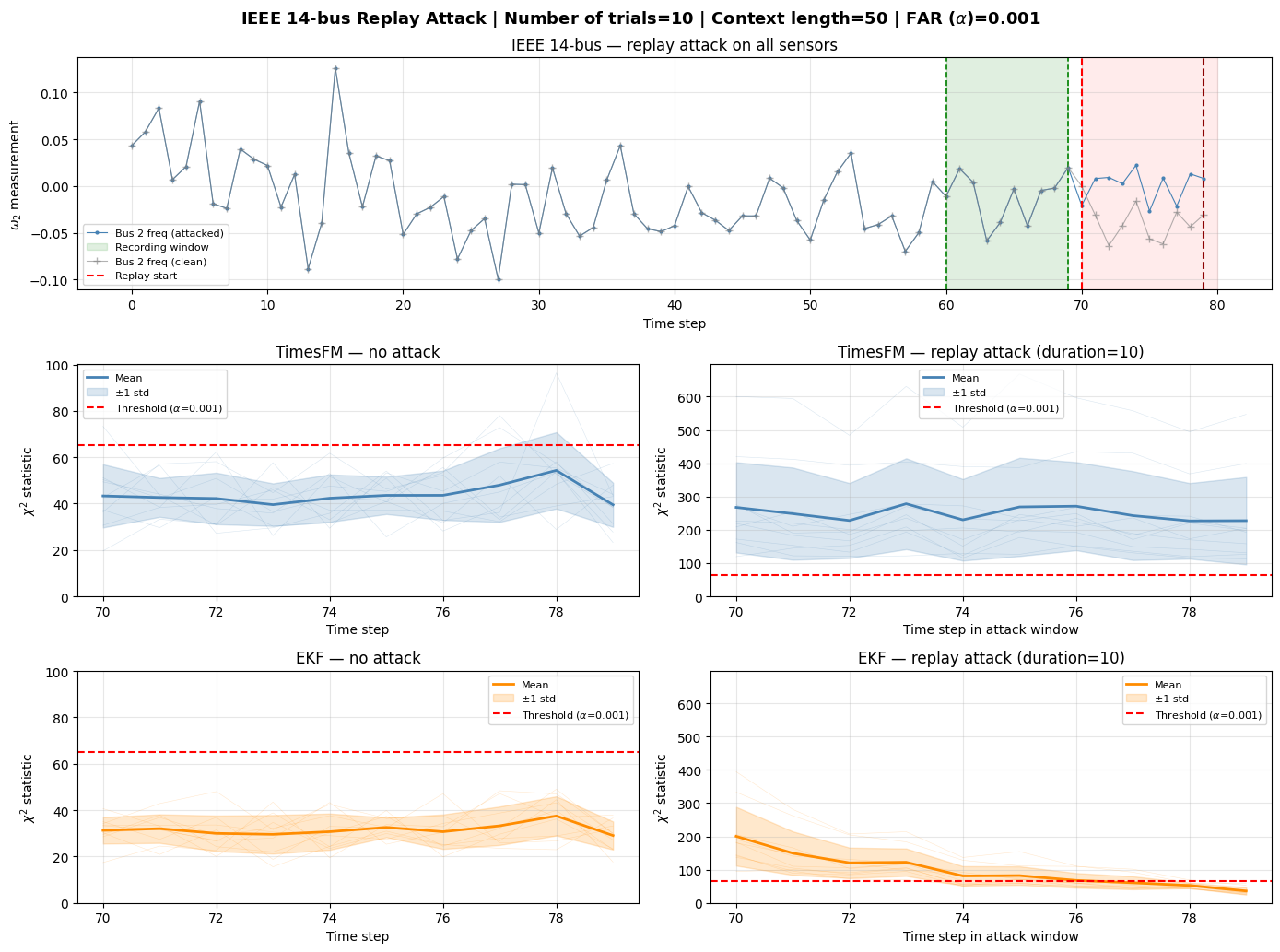}\\
    \vspace{4pt}
    \includegraphics[width=8cm]{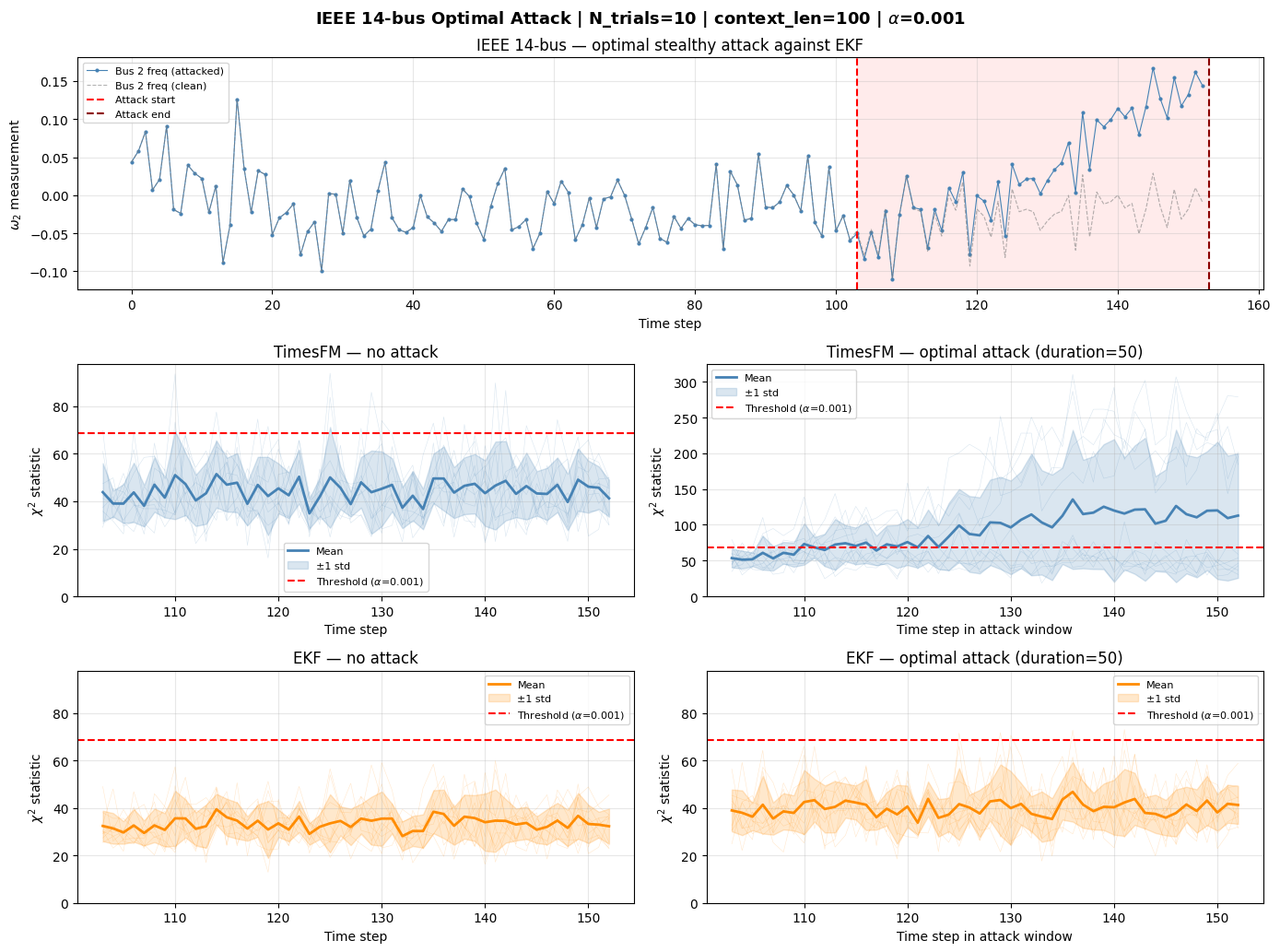}
    \caption{\small Evaluation of Algorithm~\ref{alg:LTI:detection} on the IEEE 14-bus system under replay attacks (top three rows) and stealthy attacks constructed using Theorem~\ref{thm:impact:NL} (bottom three rows). For each attack type: (Top) bus frequency $\omega_2[k]$ during nominal operation and attack phase; (Middle) mean and standard deviation of the test statistic across $10$ trials alongside threshold $\tau_p$, for the \tfm-based detector under nominal operation (left) and under attack (right); (Bottom) same for the EKF-based detector.}
    \label{fig:NL:results}
\end{figure}
\section{Conclusions}\label{sec:conclusion}
In this paper, we proposed a \tfm-based, model-structure-free attack detector. We derived closed-form optimal attack policies against a $\chi^2$ detector for both LTI and nonlinear systems, and showed that such attacks, while stealthy by construction against the model-based detector, can be reliably detected by the proposed \emph{TimesFM}-based secondary detector. Finally, we showed that \emph{TimesFM} predictions can serve as a mitigation tool, enabling robust state estimation even when all sensors are under attack. An important direction for future work is a rigorous characterization of adversaries with black-box query access to $g_s$; a preliminary empirical investigation is provided in Section~\ref{sec:adaptive}.
\bibliographystyle{ieeetr}
\bibliography{ref}
\end{document}